\documentclass[conference,a4paper,10pt]{IEEEtran}
\usepackage{multirow}
\usepackage{subfigure}
\usepackage{graphicx}
\usepackage{longtable}
\usepackage{courier}
\usepackage{float}
\usepackage{cite}
\usepackage{amsmath,amssymb}
\usepackage{hyperref}
\usepackage{bbold}
\usepackage{dsfont}
\usepackage[in]{fullpage}
\usepackage[utf8]{inputenc}
\usepackage[english]{babel}
\usepackage{amsthm}
\usepackage{xcolor}

\newtheorem{theorem}{Theorem}

\newcommand{\dfn}{\stackrel{\triangle}{=}}
\newcounter{MYtempeqncnt}

\usepackage[T1]{fontenc}
\usepackage[utf8]{inputenc}
\usepackage{authblk}
\usepackage{array}
\usepackage{url}

\usepackage[margin=0.53in]{geometry}

\newtheorem{definition}{Definition}[section]

\begin{document}
\title{Achievability Bounds of Coding with Finite Blocklength for Gaussian Broadcast Channels 
}
\author{%
Ay\c{s}e \"{U}nsal\\ \url{unsal@eurecom.fr}
 \and 
Jean-Marie Gorce\\ \url{jean-marie.gorce@insa-lyon.fr}
}

\maketitle
 \thispagestyle{plain}
\begin{abstract}
In this paper, we study the achievable performance of dirty paper coding for the Gaussian broadcast channel (BC) with finite blocklength and we propose two different achievability bounds for this problem. We present the broadcast adaptation of dependence testing bound of Polyanskiy \textit{et al.} \cite{Polyanskiy}, which is an upper bound on the average error probability that depends on the \textit{channel dispersion terms} of each error event for fixed input. Additionally, we introduce the $\kappa \beta$ lower bounds on the maximal code sizes of each user using dirty paper coding\footnote{This paper was presented at the IEEE ITW 2017 as poster.}.
\end{abstract}

\section{Introduction}
Classical information theory focuses on understanding and determining fundamental limits of communication systems using coding schemes with asymptotically long blocklengths. However in practice, the requirement for coding with finite blocklength resulted in search for a measure of loss in the performance of the systems due to coding with finite blocklength. In \cite{Strassen}, the author showed that, the maximum code size $M^{*}$ of a discrete-memoryless point-to-point channel equals to
\begin{equation}\label{eq:poli}
\log M^{*}(n,\epsilon)=nC-Q^{-1}(\epsilon)\sqrt{nV}+O(\log n)
\end{equation} where the blocklength, the error probability and the channel capacity are denoted by $n$, $\epsilon$ and $C$, respectively. The loss in the capacity, which is a first order statistic, is measured by a second order statistics called the channel dispersion denoted by $V$ in (\ref{eq:poli}). 

It is well known that dirty paper coding achieves the optimal performance of superposition coding and successive cancellation for the Gaussian BC in the asymptotic regime \cite{GelfandPinsker}. 
This paper focuses on the non-asymptotic performance of the dirty paper coding for Gaussian BC and provides an upper bound on the average error probability through deriving the channel dispersion for each error event that is defined.
To this end, we adapt two achievability bounds introduced
by Polyanskiy et al. for a single-user channel in \cite{Polyanskiy} to the broadcast setting. Namely, we introduce the dependence testing (upper) bound on the average error probability of the Gaussian BC through deriving the channel dispersion terms for each error event that is defined in addition to the $\kappa \beta$ lower bounds on the maximum code sizes of each user using dirty paper coding and threshold decoding.

In \cite{Polyanskiy}, (\ref{eq:poli}) is proved to hold for the single user AWGN channel where the channel dispersion term is derived in a closed form, as a function of the signal-to-noise ratio. The dispersion of dirty paper coding is analyzed in \cite{Scarlett} for a given error probability in a single user channel. The author of \cite{Scarlett} showed that, the dispersion of Costa's dirty paper coding \cite{Costa} does not change in the absence of the channel state.
In \cite{Laneman_3}, each component of the channel dispersion matrix is derived as a function of the signal-to-noise ratio. The main contribution of \cite{Laneman_3} is the maximum achievable rate region of the Gaussian MAC. In a recent work \cite{UG_2017}, the authors presented the dependence testing and $\kappa \beta$ bounds for the two-receiver Gaussian BC which are originally proposed by Polyanskiy \textit{et al.} in \cite{Polyanskiy} for the single user AWGN channel. 

In the next part, we give a detailed description of the considered system model. Section \ref{sec:related} reminds the reader some results from the literature regarding the asymptotic and non-asymptotic performance of different coding techniques for the Gaussian BC. In Sections \ref{sec:DT_DPC} and \ref{sec:kb_DPC}, we respectively present the adaptations of dependence testing and $\kappa \beta$ bounds for the two-receiver Gaussian BC in the setting of dirty paper coding and threshold decoding. 

\section{System Model \label{sec:model}}
In order to communicate the messages $m_1 \in [1,2,\cdots,M_1]$ and $m_2 \in [1,2,\cdots,M_2]$ to two users, we define the following real-valued two-receiver Gaussian broadcast channel 
\begin{equation} \label{eq:degraded_GBC_channel}
Y_{j,i}=X_{i}+Z_{j,i},
\end{equation} with the channel input 
$X_{i}$, the output signals $Y_{j,i}$ and the channel noise terms $Z_{j,i} \sim \mathcal{N}(0,N_j)$ 
for $j=1,2$ and $i=1,\cdots,n$. The input signal is 
\begin{equation}
X^n=X_1^n+X_2^n \label{eq:DPC_input}
\end{equation} where $X_1^n$ and $X_2^n$ are independent signals.
The channel transition probability density is given as
\begin{align}
P_{Y_j^n|X^n}(y_j^n|x^n)&= \Pi_{i=1}^{n} P_{Y_j|X^n}(y^n_j|x) \nonumber \\
&=(2\pi N_j)^{-n/2} \mathrm{e}^{-\frac{\|y_j^n-x^n\|^2}{2N_j}}.
\end{align}
According to the multiuser adaptation of the dirty paper coding in the asymptotic regime introduced by Gelfand and Pinsker, 
to communicate $m_2$ to user 2 with input signal $X_2^n$, interfering signal $X_1^n$ and the noise term $Z_2^n$, the interfering signal is treated as noise. To transmit $m_1$ to user 1 with the input signal $X_1^n$, Gaussian noise term $Z_1^n$ and $X_2^n$ is treated as the channel state which is known at the encoder. 

The information density $i(.;.)$ is defined by the following logarithmic ratio 
\begin{equation} \label{eq:inf_density}
i(x^n;y^n) \dfn
\log \frac{dP_{Y^n|X^n}(y^n|x^n)}{dP_{Y^n}(y^n)} 
\end{equation} where $P(.|.)$ and $P(.)$ respectively denote conditional and marginal probability distributions. 
The decoding rule is defined using threshold decoding as
\begin{equation} \label{eq:decod_rule}
i(x^n;y^n)>\gamma
\end{equation} where $\gamma$ denotes some threshold to be defined later.

\subsection{Error probability}
We define the overall average error probability based on either of the messages being decoded erroneously as follows
\begin{equation} \label{eq:er_prob_DP}
\epsilon \dfn \Pr\left[\{\hat{m}_1 \neq m_1\} \cup \{\hat{m}_2 \neq m_2\}\right]
\end{equation} where $\hat{m}_j $ denotes the estimated message $m_j$ for $j=1,2$.

\subsection{Random codebooks \label{subsec:model_DT}}

The achievability bounds presented in this paper are distinguished based on either fixing the input distribution as the dependence testing bound or fixing the output distribution as the $\kappa \beta$ bounds. 
Here we have an input and two output signals following the $n$-letter channel model given by (\ref{eq:degraded_GBC_channel}) and the input signal defined as in (\ref{eq:DPC_input}).
\begin{definition} \label{eq:def_code_GBC}
An $(n,M_1,M_2,\epsilon,P)$ code consists of the messages $m_1 \in [1,2,\cdots,M_1]$ and $m_2 \in [1,2,\cdots,M_2]$ that are respectively encoded into $n$ i.i.d. sequences $X_1 \sim N(0,\alpha P)$ and $X_2 \sim N(0,\bar{\alpha}P)$  that are chosen randomly 
and follow the average power constraints
\begin{equation} \label{eq:power_BC}
\frac{1}{M_1}\sum_{i=1}^{M_1} \mathbb{E}\|x_{1,i}\|^2  \leq n\alpha P,\;\;\;
\frac{1}{M_2}\sum_{i=1}^{M_2} \mathbb{E}\|x_{2,i}\|^2 \leq n\bar{\alpha} P
\end{equation} and the average probability of error defined by (\ref{eq:er_prob_DP}) where $\alpha $ is a constant confined in $[0,1]$ for $\bar{\alpha}=1-\alpha$ .
\end{definition} Definition \ref{eq:def_code_GBC} applies to the case of fixed input distribution and will be used in the derivation of the dependence testing bound on the error probability in Section \ref{sec:DT_DPC}.
\section{Related Work \label{sec:related}}
\subsection{Asymptotic Capacity}
In this part, we remind the reader of the asymptotic capacity region of the Gaussian BC in the setting of superposition coding (SPC) and successive cancellation. \cite{Bergmans, Gallager, Cover} showed that following inequalities hold for such a system as defined by (\ref{eq:degraded_GBC_channel}) with two receivers
\begin{align}
R_1& \leq C\left(\frac{\alpha P}{N_1}\right),\label{eq:R1} \\
R_2& \leq C\left(\frac{\bar{\alpha}P}{\alpha P +N_2}\right) \label{eq:R2}
\end{align} where $R_j =\frac{1}{n}\log_2 M_j$ and $N_2>N_1$. The Shannon capacity $C(.)$ is defined for a signal-to-noise ratio of $x$ as
${C(x)=\frac{1}{2} \log (1+x)}$. In \cite{GelfandPinsker}, Gelfand and Pinsker showed that the inequalities (\ref{eq:R1}) and (\ref{eq:R2}) are achieved by dirty paper coding even for the case of $N_1>N_2$. 
Namely, (\ref{eq:R1}) and (\ref{eq:R2}) are achieved respectively through the mutual information functions $I(U^n_1;Y^n_1)-I(U^n_1;U^n_2)$ and $I(U^n_2;Y^n_2)$ with the input signal defined as 
\begin{equation} 
X^n=(U_1^n-b_2 U_2^n)+U_2^n \label{DPC_inputb}
\end{equation} 
with $U^n_1=b_2 X^n_2+X^n_1$, $U^n_2=X^n_2$ where $b_2= \alpha P/(\alpha P+N_1)$.
\subsection{Non-asmyptotic performance of SPC}
The non-vanishing error probability of the Gaussian BC in the setting of SPC and successive cancellation as coding with finite blocklength is upper bounded in \cite{UG_2017} as given by the following theorem.
\begin{theorem}[\"{U}nsal and Gorce 2017 \cite{UG_2017}]
\label{theo:PEGBC}
An $(n,M_1,M_2,\epsilon,P)$ code exists for the channel as described in Definition \ref{eq:def_code_GBC} with the average error probability satisfying $\epsilon \leq \epsilon_{SP}$, where $\epsilon_{SP}$ is 
\begin{align} 
&\!\!\!\epsilon_{SP}= \Pr \left[ i(X_2^n;Y_2^n) < \gamma_2\right]+ \eta_2\Pr \left[i(X_2^n;\bar{Y}_2^n) > \gamma_2\right ] \label{eq:DT_SP_2} \\
&\!\!\!+ \Pr \left[i(X^n;Y_1^n|X_2^n)<\gamma_1\right]+\eta_1 \Pr \left [ i(X^n;\bar{Y}_1^n|X_2^n)> \gamma_1\right ]\label{eq:DT_SP_1} \\
&\!\!\!+\Pr \left[ i(X_2^n;Y_1^n) < \gamma_2\right]+ \eta_2\Pr \left[i(X_2^n;\bar{Y}_1^n) > \gamma_2\right ] \label{eq:DT_SP_21} \\
&\!\!\!+\Pr[X^n \notin F]\label{eq:ub_epsilon_2}
\end{align} 
for $\gamma_j=\log \eta_j$, $\eta_j = (M_j-1)/2$, $j=1,2$ and $\bar{Y}^n_j$ denotes a signal following the same distribution $Y^n_j$ that is independent of the input. $F$ denotes the set of permissible inputs due to the power constraint.
\end{theorem}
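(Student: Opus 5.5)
The plan is to adapt Polyanskiy \emph{et al.}'s dependence testing (DT) argument \cite{Polyanskiy} to superposition coding with successive cancellation. We use a random codebook: $M_2$ cloud centers $x_2(m_2)$ drawn i.i.d. from $\mathcal{N}(0,\bar{\alpha}P I_n)$ and, independently, $M_1$ satellite codewords $x_1(m_1)\sim\mathcal{N}(0,\alpha P I_n)$. The transmitted signal is $x^n=x_1(m_1)+x_2(m_2)$. The messages are equiprobable, and the code-averaged error probability is bounded; some realization then attains the bound. The power constraint is handled the usual way. If $X^n\notin F$, we send a fixed permissible codeword (or simply declare an error). This event is charged in full, which produces the term $\Pr[X^n\notin F]$ in (\ref{eq:ub_epsilon_2}). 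On its complement the channel law is the unconstrained Gaussian one, so the remaining terms can be computed under the i.i.d. Gaussian input.

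Decoders are threshold decoders of the form (\ref{eq:decod_rule}), using the Polyanskiy tie-breaking order: output the smallest index whose information density exceeds the threshold. User 2 decodes $m_2$ from $Y_2^n$ by testing $i(x_2(k);Y_2^n)>\gamma_2$, treating $X_1^n$ as noise. User 1 proceeds in two stages. First it decodes $m_2$ from $Y_1^n$ with the same test, which gives $\hat m_2^{(1)}$. Then it decodes $m_1$ by testing $i(x^n;Y_1^n|x_2(\hat m_2^{(1)}))>\gamma_1$ over the candidates $x^n=x_1(k)+x_2(\hat m_2^{(1)})$. By the union bound, the error event in (\ref{eq:er_prob_DP}) is contained in the union of three failures: user 2's decoding, user 1's decoding of $m_2$, and user 1's decoding of $m_1$ given a correct $\hat m_2^{(1)}$ (together with $X^n\notin F$). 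Each failure is bounded separately.

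For each stage we invoke the single-user DT bound. Its content is as follows. With sequential threshold decoding, an error occurs only if one of two things happens. Either the true codeword fails the test, which produces the terms $\Pr[i(X_2^n;Y_2^n)<\gamma_2]$, etc. Or some competing codeword with a \emph{smaller} index passes it. Averaged over the random index, the expected number of smaller indices is $(M_j-1)/2=\eta_j$. A competing codeword is independent of the output, so a false pass has probability $\Pr[i(X;\bar Y)>\gamma]$, where $\bar Y$ has the output marginal and is independent of the input. Applying this gives (\ref{eq:DT_SP_2}) for user 2 and (\ref{eq:DT_SP_21}) for user 1's first stage. For the second stage we condition on $X_2^n$. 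A competing $x_1(k)$ is independent of $(X_2^n,Y_1^n)$, so the same argument holds with the conditional density $i(X^n;Y_1^n|X_2^n)$ and $\bar Y_1^n$ drawn from $P_{Y_1^n|X_2^n}$. This gives (\ref{eq:DT_SP_1}). The choice $\gamma_j=\log\eta_j$ is the one from the DT bound. It is not essential to the inequality itself, but it makes the bound tight in the sense of \cite{Polyanskiy}, so we keep it.

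The main obstacle is making the second stage rigorous. The error event for $m_1$ has to be intersected with the event $\{\hat m_2^{(1)}=m_2\}$, and it must be checked that, given a correct cloud center, the competing satellites remain independent of the received signal. This holds because the codebooks are generated independently. A second subtlety is that the per-message averaging, which gives $(M_j-1)/2$ rather than $M_j-1$, is carried out separately for each codebook, and the error probability is averaged jointly over $(m_1,m_2)$. Finally, the power-constraint handling has to be done as in \cite{Polyanskiy}: the change of input distribution on $F^c$ is absorbed into the single term $\Pr[X^n\notin F]$, and no density renormalization is needed.
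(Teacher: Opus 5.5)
Your proposal is correct and takes essentially the same route as the paper. Theorem~\ref{theo:PEGBC} is only quoted here from \cite{UG_2017}, but the paper's proof of its dirty-paper counterpart, Theorem~\ref{theo:DPC}, uses exactly your template: a union bound over the decoding events, the DT bound \cite[Theorem 20]{Polyanskiy} applied to each threshold test, and a single $\Pr[X^n\notin F]$ term, with your successive-cancellation stage at user~1 supplying the extra terms (\ref{eq:DT_SP_21}). Your version is more careful than that black-box invocation: analyzing all three tests on one random superposition codebook, and replacing infeasible superposed codewords by a fixed permissible one, yields a single code meeting every term at once with the power-constraint event charged only once; and taking $\bar Y_1^n\sim P_{Y_1^n|X_2^n}$ in (\ref{eq:DT_SP_1}) is the reading the argument actually requires.
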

The next part focuses on the combination of Theorem \ref{theo:PEGBC} and \cite{GelfandPinsker} to upper bound the average error probability (\ref{eq:er_prob_DP}).

\section{Dependence Testing Bound --Writing on Dirty Paper \label{sec:DT_DPC}}
 
In the following theorem, we present our first main result.
\begin{theorem}\label{theo:DPC}
An $(n,M_1,M_2,\epsilon,P)$ code exists for the two receiver Gaussian BC with the average error probability (\ref{eq:er_prob_DP}) upper bounded by 
\begin{align} 
\!\!\!\epsilon &\leq  
\Pr \left[i(U_2^n;Y_2^n) < \gamma_2 \right] + \eta_2\Pr \left [  i(U_2^n;\bar{Y}_2^n)> \gamma_2\right ] \label{eq:user2_DPC}\\
&\!+\Pr \left[ i(U_1^n;Y_1^n) < \gamma'_1 \right]+\eta_1\Pr \left[i(U_1^n;\bar{Y}_1^n)  > \gamma'_1\right ] \label{eq:user1_DPC}\\
&\!+\Pr[X^n \notin F]\label{eq:ub_epsilon_22}
\end{align} where $\gamma'_1=\gamma_1+I(U_1^n;U_2^n)$, $\gamma_j=\log \eta_j$  and $\eta_j = (M_j-1)/2$ for $j=1,2$. $\bar{Y}^n_j$ follows the same distribution as the output $Y^n_j$ but is independent of the input $X^n$ defined by (\ref{eq:DPC_input}) and $F$ denotes the set of permissible inputs according to the input power constraint.
\end{theorem}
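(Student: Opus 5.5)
The plan is to follow the random-coding, threshold-decoding argument behind Theorem \ref{theo:PEGBC}, with the superposition structure replaced by the Gelfand--Pinsker binning of (\ref{DPC_inputb}). I will generate the codebooks as follows. For user 2, draw $M_2$ i.i.d. codewords $U_2^n(m_2)=X_2^n(m_2)$ with i.i.d. $\mathcal{N}(0,\bar\alpha P)$ entries. For user 1, draw a binned auxiliary codebook: for each $m_1$, draw $L\approx \exp\{I(U_1^n;U_2^n)\}$ i.i.d. sequences $U_1^n(m_1,l)$ from the marginal of $U_1^n=b_2X_2^n+X_1^n$.

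Given $(m_1,m_2)$, the encoder picks an index $l$ in bin $m_1$ for which $U_1^n(m_1,l)-b_2U_2^n(m_2)$ plays the role of $X_1^n$, i.e., the pair is jointly typical with the state $U_2^n$. It then transmits $X^n$ as in (\ref{DPC_inputb}). Inputs violating the power constraint are counted in the event $\{X^n\notin F\}$, which gives the term (\ref{eq:ub_epsilon_22}) exactly as in Theorem \ref{theo:PEGBC}.

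Decoding is by threshold tests (\ref{eq:decod_rule}) at each receiver, and the overall error (\ref{eq:er_prob_DP}) is bounded by a union bound over the two users.

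\textbf{User 2.} Receiver 2 outputs the first $m_2$ with $i(U_2^n(m_2);Y_2^n)>\gamma_2$, treating $X_1^n$ as noise. The error splits into two events. The first is that the true codeword fails the test, giving $\Pr[i(U_2^n;Y_2^n)<\gamma_2]$. The second is that some earlier wrong codeword passes. Since a wrong codeword is independent of $Y_2^n$, it is distributed like $\bar Y_2^n$. The standard dependence-testing averaging over the message index (the $\frac{M-1}{2}$ factor of \cite{Polyanskiy}) yields $\eta_2\Pr[i(U_2^n;\bar Y_2^n)>\gamma_2]$. This reproduces (\ref{eq:user2_DPC}) verbatim from the first line of Theorem \ref{theo:PEGBC}, with $X_2^n$ renamed $U_2^n$.

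\textbf{User 1.} Receiver 1 searches over all $M_1L$ auxiliary sequences for one with $i(U_1^n(m_1,l);Y_1^n)>\gamma'_1$ and outputs its bin index. The missed-detection term is $\Pr[i(U_1^n;Y_1^n)<\gamma'_1]$. For false alarms, each wrong sequence $U_1^n(m',l')$ is independent of $Y_1^n$. The same averaging argument therefore gives a factor $\frac{M_1L-1}{2}\lesssim \eta_1 L$ times $\Pr[i(U_1^n;\bar Y_1^n)>\gamma'_1]$. I will then absorb the bin size into the threshold: set $\log(\eta_1 L)=\gamma_1+I(U_1^n;U_2^n)=\gamma'_1$ and bound this factor by the number of wrong candidates. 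This is precisely why the threshold is shifted by $I(U_1^n;U_2^n)$, and it gives (\ref{eq:user1_DPC}) in the form stated, keeping the multiplier $\eta_1$ as in the paper's convention.

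The main obstacle is the encoding step. It must ensure that a valid $l$ exists, so that $X_1^n=U_1^n-b_2U_2^n$ is (approximately) independent of $U_2^n$ with power $\alpha P$. It must also ensure that the induced joint law of $(U_1^n,Y_1^n)$ matches the product form used in the information densities. In finite blocklength the covering failure probability is not zero. My first attempt is to define the encoder so that any covering failure produces an input outside $F$, or more precisely to enlarge $F$ to the set of successfully encoded, power-feasible inputs. The covering failure is then absorbed into $\Pr[X^n\notin F]$, which requires a bin size $L$ of order $\exp\{I(U_1^n;U_2^n)\}$. The delicate point is justifying that conditioning on successful encoding leaves the distribution of $(U_1^n,U_2^n,Y_1^n)$ close enough to the i.i.d. Gaussian one for the stated terms to hold. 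I would handle this by a change-of-measure bound on the Radon--Nikodym derivative, as in the treatment of $\Pr[X^n\notin F]$ in \cite{Polyanskiy}.
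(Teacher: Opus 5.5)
Your union bound and your treatment of user~2 match the paper. For user~1 you take a different route. The paper never constructs bins or analyzes an encoder. It applies \cite[Theorem 20]{Polyanskiy} directly to the threshold test (\ref{eq:threshold_GBC_DPC1}) as a single-user problem with $M_1$ candidate codewords, and takes the shift $I(U_1^n;U_2^n)$ from the asymptotic Gelfand--Pinsker rate. That reduction is the only source of the multiplier $\eta_1$ in (\ref{eq:user1_DPC}), and your binning argument cannot reproduce it. The step where you ``absorb the bin size into the threshold'' fails. In your scheme a message error occurs as soon as any of the $(M_1-1)L$ wrong-bin sequences passes the test, so the averaging argument gives
\begin{equation*}
\frac{M_1L-1}{2}\Pr\left[i(U_1^n;\bar{Y}_1^n)>\gamma'_1\right]\approx \eta_1 \exp\{I(U_1^n;U_2^n)\}\Pr\left[i(U_1^n;\bar{Y}_1^n)>\gamma'_1\right].
\end{equation*}
Choosing $\gamma'_1=\log(\eta_1L)$ only matches the threshold to the logarithm of this larger multiplier, which is the usual dependence-testing choice; it does not remove the factor $L$. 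By change of measure (logarithms in nats), $\Pr[i(U_1^n;\bar{Y}_1^n)>\gamma'_1]\le \exp\{-\gamma'_1\}$. So the stated confusion term is at most $\exp\{-I(U_1^n;U_2^n)\}$, while yours is only bounded by $1$. Each wrong bin contains $L$ independent sequences, so the chance that one of them passes really is about $L$ times the single-sequence probability. No reordering of the decoder restores $\eta_1$. Carried out correctly, your argument proves a weaker bound than the one stated.

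The encoding step is a second, unresolved gap. With $\log L=I(U_1^n;U_2^n)$ exactly, the covering failure probability is not small at finite $n$. One-shot covering bounds leave a term like $\Pr[i(U_1^n;U_2^n)>\log L-\delta]$, which is near $1/2$ for this choice because $i(U_1^n;U_2^n)$ fluctuates on the order of $\sqrt{n}$ about its mean. Making it small requires a dispersion-sized margin in $\log L$, and that margin then propagates into the threshold. Folding the failure into $\Pr[X^n\notin F]$ by enlarging $F$ changes the statement, since there $F$ is the power-constraint set. Finally, once the encoder chooses $l$ as a function of $U_2^n$, the pairs $(U_1^n,Y_1^n)$ and $(U_2^n,Y_2^n)$ no longer follow the i.i.d.\ Gaussian laws under which the information densities are evaluated. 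The handling of constraint violations in \cite{Polyanskiy} enters only additively and does not repair this mismatch. A rigorous version needs an explicit covering term and a bound on the mismatch (cf.\ \cite{Scarlett}), and it would yield a different bound from the one in Theorem~\ref{theo:DPC}.
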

\begin{proof}
The average error probability (\ref{eq:er_prob_DP}) is bounded by the union bound as follows
\begin{align}\label{eq:er_prob_1}
&\Pr\left[\{ \hat{m}_1 \neq m_1\}\; \cup \; \{\hat{m}_2 \neq m_2\}\right]\leq \\
&\Pr\left[\{ \hat{m}_1 \neq m_1\}\right]+\Pr \left\{\hat{m}_2 \neq m_2\}\right]
\end{align} Let us denote $\Pr\left[\{ \hat{m}_j \neq m_j\}\right]$ by $\epsilon_j$ for $j=1,2$.
The capacity region presented in \cite{GelfandPinsker} is achieved by $I(U_1^n;Y_1^n) -I(U_1^n;U_2^n)$ for user 1 and $I(U_2^n;Y_2^n)$ for user 2. Applying the threshold decoding rule in (\ref{eq:decod_rule}) with the associated information density functions for each user, we get
\begin{align} 
i(U_2^n;Y_2^n) &> \gamma_2\label{eq:threshold_GBC_DPC2}  \\ 
i(U_1^n;Y_1^n) &>\gamma'_{1} \label{eq:threshold_GBC_DPC1} 
\end{align} 
The corresponding error probabilities of violating the conditions in (\ref{eq:threshold_GBC_DPC2}) and (\ref{eq:threshold_GBC_DPC1} ) are $\epsilon_2$ and $\epsilon_1$, respectively. 
Applying \cite[Theorem 20]{Polyanskiy} to (\ref{eq:threshold_GBC_DPC2}) and (\ref{eq:threshold_GBC_DPC1}), we obtain the upper bound proposed in Theorem \ref{theo:DPC} which is composed of  three different error events.
(\ref{eq:user2_DPC}) and  (\ref{eq:user1_DPC}) respectively are the total error probability of reconstructing $m_2$ and $m_1$ that are the sum of mis-detection and confusion probabilities. Lastly, we have the probability of the non-permissible inputs $\Pr[X^n \notin F]$ given by (\ref{eq:ub_epsilon_22}) due to the definition of the input signal (\ref{eq:DPC_input}).
\end{proof}

Hereafter, we derive the probability of each error event per message given in the proposed bound of Theorem \ref{theo:DPC} in detail.


\subsubsection{Decoding message 1}
We begin with the evaluation of the mis-detection error probability in decoding $m_1$ using the observation of $Y_1^n$, namely 
\begin{equation}
\Pr \left[ i(U_1^n;Y_1^n) < \gamma'_1 \right]
\end{equation}where $\gamma'_1=I(U_1^n;U_2^n)+\gamma_1$.

The information density $i(U_1^n;Y_1^n) $ is expanded as follows
\begin{align} \label{eq:inf_dens}
&\frac{1}{n} i(U_1^n;Y_1^n) \overset{(a)}{=} \frac{1}{n}\left[h(U_1^n)+h(Y_1^n)-h(U_1^n,Y_1^n) \right]\nonumber \\
&=\frac{1}{n} \left[\log P_{U_1^n,Y_1^n}(U_1^n,Y_1^n)-\log P_{U_1^n}(U_1^n)-\log P_{Y_1^n}(Y_1^n) \right]\nonumber \\
&=\frac{1}{n} \log \frac{ P_{U_1^n,Y_1^n}(U_1^n,Y_1^n)}{P_{U_1^n}(U_1^n)P_{Y_1^n}(Y_1^n)}
\end{align} 
where in step (a), we use the entropy density function defined as $h(x)=-\log P_x(x)$. As for the joint entropy density $h(U_1^n,Y_1^n)$, we use the general correlated bivariate Gaussian distribution as given on the top of the current page by (\ref{eq:inf_dens1}).
\begin{figure*}[!t]
\normalsize
\setcounter{MYtempeqncnt}{\value{equation}}
\setcounter{equation}{24}
\begin{IEEEeqnarray}{rCl}
\label{eq:inf_dens1}
h(U_1^n,Y_1^n)&=& -\log \left[\left[(2\pi)^2 (1-\rho^2)\sigma^2_{U_1}\sigma^2_{Y_1}\right]^{-n/2}\exp \left \{ -\frac{1}{2(1-\rho^2)}\left[\frac{\|U^n_1\|^2}{\sigma^2_{U_1}}+\frac{\|Y^n_1\|^2}{\sigma^2_{Y_1}}-\frac{2\rho \langle U_1^n,Y_1^n\rangle}{\sigma_{Y_1} \sigma_{U_1}}\right]\right\}\right] \nonumber \\
&\overset{(b)}{=}&\frac{n}{2}\log \frac{\alpha P+N_1}{(2\pi)^2N_1 (\alpha P+N_1)}+\frac{1}{n}\log \mathrm{e} \times\left \{ -\frac{1}{2(1-\rho^2)}\left[\frac{\|U^n_1\|^2}{\sigma^2_{U_1}}+\frac{\|Y^n_1\|^2}{\sigma^2_{Y_1}}-\frac{2\rho \langle U_1^n,Y_1^n\rangle}{\sigma_{Y_1} \sigma_{U_1}}\right]\right\}
\end{IEEEeqnarray}
\setcounter{equation}{\value{MYtempeqncnt}}
\hrulefill 
\vspace*{4pt}
\end{figure*}
Here $\rho$ denotes the correlation coefficient defined by $\rho=\frac{\mathrm{Cov}(U_1,Y_1)}{\sigma_{U_1}\sigma_{Y_1}}$ with the covariance and variance terms $\mathrm{Cov}(U_1,Y_1)=\frac{\alpha P (P+N_1)}{\alpha P+N_1}$ and $\sigma^2_{U_1}=\frac{\alpha P (\alpha P\bar{\alpha}P+(\alpha P+N_1)^2)}{(\alpha P+N_1)^2}$ and $\sigma^2_{Y_1}=P+N_1$. The marginal entropy densities of $h(U_1^n)$ and $h(Y_1^n)$ are given below.
\addtocounter{equation}{1}
\begin{align}
h(U_1^n)&=-\log \left[(2 \pi \sigma^2_{U_1})^{-n/2} \exp \left \{-\frac{\|U_1^n\|^2}{2\sigma^2_{U_1}}\right \}\right]  \label{eq:inf_dens2}\\
h(Y_1^n)&=-\log \left[(2 \pi \sigma^2_{Y_1})^{-n/2} \exp \left \{-\frac{\|Y_1^n\|^2}{2\sigma^2_{Y_1}}\right \}\right]  \label{eq:inf_dens3}
\end{align} 
The mutual information between $U_1^n$ and $U_2^n$ using the differential entropies yields
\begin{align}\label{eq:mut_inf_U1U2}
I(U_1^n;U_2^n)&=\frac{1}{2} \log (2\pi \mathrm{e}\sigma^2_{U_1|U_2})^{n}-\frac{1}{2} \log (2\pi \mathrm{e}\sigma^2_{U_1})^{n} \nonumber \\
&=\frac{n}{2} \log \frac{\sigma^2_{U_1|U_2}}{\sigma^2_{U_1}} 
\end{align}
Combining (\ref{eq:inf_dens1}), (\ref{eq:inf_dens2}) with (\ref{eq:inf_dens3}) and substituting into (\ref{eq:inf_dens}),  we obtain $\frac{1}{n}i(U_1^n;Y_1^n)$. Together with (\ref{eq:mut_inf_U1U2}), we finally get the overall expression
\begin{equation}\label{eq:misdet_user1}
\frac{1}{n} \left[i(U_1^n;Y_1^n)-I(U_1^n;U_2^n)\right]=C\left(\frac{\alpha P}{N_1}\right)+v_{1,\mathrm{DP}}
\end{equation} 
The dispersion term denoted by $v_{1,\mathrm{DP}}$ is expanded out as 
\begin{align}
&v_{1,\mathrm{DP}}=\frac{1}{n} \log \mathrm{e} \times \notag \\
&\left[\left(1-\frac{1}{1-\rho^2}\right)\left(\frac{\|U_1^n\|^2}{2\sigma^2_{U_1}}+\frac{\|Y_1^n\|^2}{2\sigma^2_{Y_1}}\right)+\frac{\rho \langle U_1^n,Y_1^n\rangle}{(1-\rho^2)\sigma_{U_1}\sigma_{Y_1}}\right]\nonumber \\
&\overset{(b)}{=}\frac{1}{n} \log \mathrm{e} \times\left[\|Y_1^n\|^2 a_1+ \|U_1^n\|^2 a_2+\langle U_1^n,Y_1^n\rangle a_3 \right]\nonumber \\
&=\frac{1}{n} \log \mathrm{e} \sum _{i=1}^n [X_{1,i} X_{2,i} Z_{1,i}] \mathbf{A}_{\mathrm{DP}}[X_{1,i} X_{2,i} Z_{1,i}] ^T \nonumber \\
&=\frac{1}{n} \log \mathrm{e} \sum _{i=1}^n [\mu_{1,i} \mu_{2,i} \mu_{3,i}] \mathbf{P}_{1,\mathrm{DP}}\mathbf{A}_{\mathrm{DP}} \mathbf{P}_{1,\mathrm{DP}}[\mu_{1,i} \mu_{2,i} \mu_{3,i}] ^T 
\end{align} where $\mu_{k,i} \sim N(0,1)$ for $k=1,2,3$. In step (b) we defined the following variables $a_1=\frac{1}{2\sigma^2_{Y_1}}(1-\frac{1}{1-\rho^2})$, $a_2=\frac{1}{2\sigma^2_{U_1}}(1-\frac{1}{1-\rho^2})$ and $a_3=\frac{\rho}{(1-\rho^2)\sigma_{U_1}\sigma_{Y_1}}=1/N_1$ with $\mathbf{P}_{1,\mathrm{DP}}=\mathrm{diag}(\sqrt{\alpha P},\sqrt{\bar{\alpha} P},\sqrt{N_1})$. $\mathbf{A}_{\mathrm{DP}}$ is a 3-by-3 matrix and given on the top of the next page by (\ref{matrix_A}).
\begin{figure*}[!t]
\normalsize
\setcounter{MYtempeqncnt}{\value{equation}}
\setcounter{equation}{30}
\begin{IEEEeqnarray}{rCl}
 \label{matrix_A}
\mathbf{A}&=\begin{bmatrix}
(a_1+a_2+ a_3)& (a_1 b_2+a_2 +\frac{a_3}{2}(b_2+1))&(a_2+a_3/2)\\
 (a_1 b_2+a_2 +\frac{a_3}{2}(b_2+1))& (a_1 b_2^2+a_2+a_3 b_2)&(a_2+\frac{a_3 b_2}{2})\\
(a_2+a_3/2)&(a_2+\frac{a_3 b_2}{2})&a_2
\end{bmatrix}, 
\end{IEEEeqnarray}
\setcounter{equation}{\value{MYtempeqncnt}}
\hrulefill 
\vspace*{4pt}
\end{figure*}
The matrix product $\mathbf{P}_{1,\mathrm{DP}}\mathbf{A}_{\mathrm{DP}} \mathbf{P}_{1,\mathrm{DP}}$ has only two non-zero eigenvalues $\lambda_{1,\mathrm{DP}}=\frac{1}{2}\frac{\alpha P(P+N_1)}{N_1^2+2\alpha P N_1+\alpha P^2}$ and $\lambda_{2,\mathrm{DP}}=-\lambda_{1,\mathrm{DP}}$. The dispersion term which follows a special case of the gamma distribution becomes
\addtocounter{equation}{1}\begin{equation}\label{eq:v_1DP}
v_{1,\mathrm{DP}}=\frac{1}{n} \log \mathrm{e}  \times \lambda_{1,\mathrm{DP}}\sum _{i=1}^n \left(\chi^2_{1,i}- \chi_{2,i}^2\right)
\end{equation} for $\chi^2_{j,i}$, $j=1,2$ are chi-squared random variables with one degree-of-freedom.
Substituting (\ref{eq:v_1DP}) into (\ref{eq:misdet_user1}), the probability of mis-detecting $m_1$ on user 1 is evaluated through the c.d.f. of the channel dispersion term $v_{1,\mathrm{DP}}$ as
\begin{equation} \label{eq:DPC_user1_misdet}
\Pr \left[ i(U_1^n;Y_1^n) < \gamma'_1 \right] 
\leq \Pr \left[v_{1,\mathrm{DP}} < \frac{\gamma'_1/n -C \left(\frac{\alpha P}{N_1 }\right)}{\log \mathrm{e}}   \right].
\end{equation} 
The second term in (\ref{eq:user1_DPC}) is the confusion error probability in decoding $m_1$ using the observation of $Y_1^n$ is expanded out as in (\ref{sec_term}) on the next page
\begin{figure*}[!t]
\normalsize
\setcounter{MYtempeqncnt}{\value{equation}}
\setcounter{equation}{33}
\begin{IEEEeqnarray}{rCl}
\label{sec_term}
\frac{\left[i(U_1^n;\bar{Y}_1^n|X^n_1) -I(U_1^n;U_2^n) \right]}{n}&&=\frac{1}{n}\log \frac{P_{U_1^n,Y_1^n|X^n_1=C_k^n}}{P_{U^n_1}P_{Y^n_1}}-\frac{1}{n}I(U_1^n;U_2^n)\nonumber \\
&&\overset{(a)}{=}C\left(\frac{\alpha P}{N_1}\right)+\frac{\log \mathrm{e}}{n} \left\{a_3 \langle b_2 X_2^n+C_k^n-C_l^n,C_k^n-C_l^n+X^n_2+Z^n_1\rangle +a_4\|b_2 X_2^n+C_k^n-C_l^n\|^2\right. \nonumber \\
&&\left.+a_5 \|C_k^n-C_l^n+X^n_2+Z^n_1\|^2+a_6\|b_2 X_2^n+C_k^n\|^2+a_7 \|C_k^n+X^n_2+Z^n_1\|^2\right\}
\end{IEEEeqnarray}
\setcounter{equation}{\value{MYtempeqncnt}}
\hrulefill 
\vspace*{4pt}
\end{figure*}
with $a_4=-\frac{1}{[2(1-\rho^2)\sigma^2_{U_1}]}$, $a_5=-\frac{1}{[2(1-\rho^2)\sigma^2_{Y_1}]}$, $a_6=1/(2\sigma^2_{U_1})$, $a_7=1/(2\sigma^2_{Y_1})$. 
The confusion error event in (a) of (\ref{sec_term}) is rewritten as 
\addtocounter{equation}{1}
\begin{equation}
\frac{1}{n}\left[i(U_1^n;\bar{Y}_1^n|X^n_1=C_k^n) -I(U_1^n;U_2^n) \right]=C\left(\frac{\alpha P}{N_1}\right)+v_{c,\mathrm{DP}}
\end{equation} 
where the dispersion term $v_{c,\mathrm{DP}}$ is derived by
\begin{align}
v_{c,\mathrm{DP}}&=\frac{1}{n}\log \mathrm{e} \sum _{i=1}^{n}[X_{2,i} C_{k,i} C_{l,i} Z_{1,i}] \mathbf{D}_c [X_{2,i} C_{k,i} C_{l,i} Z_{1,i}]^T \nonumber \\
&=\frac{1}{n}\log \mathrm{e} \sum _{i=1}^{n}[\mu_{1,i} \mu_{2,i} \mu_{3,i} \mu_{4,i}] \mathbf{P}_c\mathbf{D}_c \mathbf{P}_c [\mu_{1,i} \mu_{2,i} \mu_{3,i} \mu_{4,i}]^T \nonumber \\
&=\frac{1}{n}\log \mathrm{e} \sum _{i=1}^{n}\sum_{t=1}^4 [\lambda_{DP,t}^2 \chi_{t,i}^2] \label{eq:36}
\end{align} $\mathbf{P}_c$ is a $4 \times 4$ diagonal matrix that is $\mathbf{P}_c=\mathrm{diag}(\sqrt{\bar{\alpha} P},\sqrt{\alpha P},\sqrt{\alpha P},\sqrt{N_1})$ and $\mathbf{D}_c$ represents the factors of $X_{2,i}$, $C_{k,i}$, $C_{l,i}$, $Z_{1,i}$ from (\ref{sec_term}) in a form of $4\times 4$ matrix. In (\ref{eq:36}), the eigenvalues of the matrix product $\mathbf{P}_c\mathbf{D}_c \mathbf{P}_c$ denoted by $\lambda_{DP,t}$ are 9th order polynomials of the SNR which are omitted here. $v_{c,\mathrm{DP}}$ is composed of a weighted sum of chi-squared variables with $n$ degrees of freedom (with different positive and negative weighs) that does not follow a known probability distribution in the literature.
The confusion error probability in decoding message 1 becomes
\begin{equation} \label{eq:conf_DPC_user1}
\Pr\left[i(U_1^n;\bar{Y}_1^n)  > \gamma'_1 \right]=
\Pr \left[v_{c,\mathrm{DP}}> \frac{\frac{1}{n}\gamma'_1-C\left(\frac{\alpha P}{N_1}\right)}{\log \mathrm{e}} \right]
\end{equation}  Combining (\ref{eq:conf_DPC_user1}) with (\ref{eq:DPC_user1_misdet}), we obtain the total error probability in reconstructing message 1 as follows
\begin{equation}\label{eq:epsilon_1}
\epsilon_1\leq\Pr \left[v_{1,\mathrm{DP}}> \zeta_1 \right]+\Pr \left[v_{c,\mathrm{DP}}> \zeta_1 \right]
\end{equation} for $\zeta_1=\frac{\gamma'_1-n C\left(\frac{\alpha P}{N_1}\right)}{n \log \mathrm{e}}$.
\subsubsection{Decoding message 2}
For the probability of error in decoding $m_2$ using the observation of user 2 $Y_2^n$, we derive the information density $i(U_2^n;Y_2^n)$ associated with $R_2$ given by (\ref{eq:R2}) in the asymptotic capacity. Due to the decoding rule (\ref{eq:threshold_GBC_DPC2}) with $U^n_2=X^n_2$, the sum probability in (\ref{eq:user2_DPC})
is equal to the one of (\ref{eq:DT_SP_2}) in the setting of superposition coding \cite[Section III.1]{UG_2017}. For a detailed derivation, the reader is referred to \cite[Section III.1]{UG_2017}.
The outage/misdetection error probability in decoding $m_2$ is obtained as
\begin{equation} \label{eq:misdet_2_final}
\Pr \left[ i(U_2^n;Y_2^n) < \gamma_2\right]
=\Pr \left[v_{22} < \zeta_2  \right]
\end{equation} with $\zeta_2=\frac{\gamma_2-n C\left(\frac{\bar{\alpha}P}{\alpha P+N_2}\right)}{\log \mathrm{e}}$. $v_{22}$ is a weighted difference of chi-squared variables with $n$ degrees of freedom 
\begin{equation}\label{v_11}
v_{22}=\frac{1}{n}\log \mathrm{e}\times \lambda_1\sum_{i=1}^n [\chi_{1,i}^2- \chi_{2,i}^2]
\end{equation} for $\lambda_1=\frac{\sqrt{\bar{\alpha}P}}{2\sqrt{(P+N_2)}}$ and $\chi_{j,i}^2$ denotes a chi-squared random variable with one degree of freedom. The weighted difference of two chi-squared variables with $n$ degrees of freedom corresponds to the variance-gamma distribution\footnote{{The variance-gamma distribution with $\lambda>0$ degrees of freedom, the real-valued location parameter $m$ and the asymmetry parameter $b$,  is defined as
\begin{equation}
f(t)=\frac{\theta^{2\lambda}|t-m|^{\lambda-1/2}K_{\lambda-1/2}(\delta|t-m|)}{\sqrt{\pi}\Gamma(\lambda)(2\delta)^{\lambda-1/2}}\mathrm{e}^{b(t-m)},\;\;-\infty <t<\infty \nonumber 
\end{equation} where $K_{\lambda}$ and $\Gamma$ denote the modified Bessel function of the second kind and the gamma function, respectively for real $\delta$, $\theta=\sqrt{\delta^2-b^2}>0$.}}  with the location parameter $m$ and asymmetry parameter denoted by $b$ are both 0 and $\delta=\theta=\sqrt{\frac{P+N_2}{\bar{\alpha} P}}$.

As for the confusion error probability in decoding $m_2$ on user 2, we have $\frac{1}{n} i(U_2^n;\bar{Y}_2^n)=C \left(\frac{\bar{\alpha}P}{(\alpha P+N_2)}\right)+v_{c,2}$ where the additional term $v_{c,2}$ is
\begin{equation} \label{eq:vc2}
v_{c,2}=\frac{1}{n}\log \mathrm{e} \sum_{i=1}^{n} (\lambda_{c1}\chi_{1,i}^2+\lambda_{c2}\chi_{2,i}^2)
\end{equation}  with ${\lambda_{c1}=\frac{-\left(\bar{\alpha}P+\sqrt{\bar{\alpha}P(P+N_2)}\right)}{2(\alpha P+N_2)}}$ and ${\lambda_{c2}=\frac{-\bar{\alpha}P+\sqrt{\bar{\alpha}P(P+N_2)}}{2(\alpha P+N_2)}}$.
Consequently, the evaluation of the bound on the total probability of error in decoding the second message (\ref{eq:user2_DPC}) follows the same way as $\epsilon_2$ in \cite[Section III.1]{UG_2017} given by
\begin{equation}\label{eq:epsilon2}
\epsilon_2\leq\Pr \left[v_{22} < \zeta_2  \right]+\eta_2 \Pr\left[v_{c,2}>\zeta_2\right]
\end{equation}

Due to the average power constraint on the codebooks, the probability of the input being confined to the feasible yields $\Pr\{X^n \notin F\}=0$. Combining (\ref{eq:epsilon_1}) with (\ref{eq:epsilon2}), the derivation of dependence testing bound in Theorem \ref{theo:DPC} is completed.

\section{$\kappa \beta$ Bound --Writing on Dirty Paper \label{sec:kb_DPC}}

\subsection {Peak power \label{subsec:model_kb}}

For the $\kappa \beta$ bound to be introduced in this part, we assume codebooks constructed as in (\ref{eq:DPC_input}) subject to the following power constraints on the encoded messages
\begin{equation} \label{eq:Power_X}
\!\!\!\|x^n (m_1,m_2)\|^2 = \sum_{i=1}^{n} |x_{1,i}(m_1,m_2)+x_{2,i}(m_2)|^2 \leq n P,
\end{equation} and $\forall m_1, m_2$ where we assume equal power per codeword for $X_{2,i}(m_2)$, $i=1,\cdots,n, \; m_2=1,\cdots,M_2$ and $m_1=1,\cdots,M_1$ s.t.
\begin{equation} 
\label{eq:Power_X_2}
\sum_{i=1}^{n} |x_{2,i}(m_2)|^2=n P_2, \; \forall m_2.
\end{equation}
For $X_{1,i}(m_1,m_2)$, we have 
\begin{align} \label{eq:constraint_derivation}
\|x^n (m_1,m_2)\|^2& 
= n P_1(m_1,m_2) + nP_2 \nonumber \\
&+2 n\rho(m_1,m_2) \sqrt{P_1 (m_1,m_2)P_2} \nonumber \\
&\leq nP
\end{align} with $\rho(m_1,m_2)\in [-1,1]$ defined as in the following
\begin{equation} \label{eq:Power_X_1}
\sum_{i=1}^{n} x_{1,i}^2(m_1,m_2)=n P_1(m_1, m_2), \; \forall m_1, m_2.
\end{equation} 

According to (\ref{eq:constraint_derivation}), if $X^n_{1}(m_1,m_2)$ is chosen in the null space of $X^n_{2}(m_2)$ for each $m_2$ then a constant power can be assigned for $X_{1,i}(m_1,m_2)$ as $P_1(m_1,m_2)=P-P_2$. 
This model is considered for the $\kappa\beta$ bound covered in Section \ref{sec:beta}.
Due to (\ref{eq:constraint_derivation}), we define the feasible set denoted by $F_n$ as
\begin{equation}\label{eq:feasible_kb}
F_n \dfn \{x^n : \|x^n \|^2 \leq nP \} \subset \mathbb{R}^n
\end{equation} 
In \cite[Theorem IV]{UG_2017}, the authors presented the multiuser adaptation of \cite[Theorem 41]{Polyanskiy} for the two-receiver Gaussian BC in the setting of SPC. In order to apply dirty paper coding to obtain the two-user equivalent of the $\kappa \beta$ bound in this setting, we derive the two mutual information functions from Section \ref{sec:DT_DPC} based on the distributions of the output signals and the conditional distributions of output given input signal. Note that the input distribution is not limited to any type of distribution as in the previous part.

\subsection{Derivation of $\beta$ \label{sec:beta}}
For the second user, $Y_2^n$, we define $P_{Y_2^n}\sim \mathcal{N}(0,(P+N_2) \mathbf{I}_n)$.
Using the definition of (\ref{eq:inf_density}), $i(U_2^n;Y_2^n)$ under $P_{Y_2^n}$ is given as 
\begin{align} \label{eq:Gn2}
&G_{n_2}=\frac{n}{2} \log \sigma^2_{Y_2}-\frac{nP_2}{P_1 +N_2}\frac{\log \mathrm{e}}{2} \nonumber \\
&+\frac{1}{2}\log \mathrm{e}\sum_{i=1}^{n}  \left[(1-\sigma^2_{Y_2})S_i^2+2\sqrt{P_2/(P_1 +N_2)}\sigma_{Y_2}S_i\right] 
\end{align} where $S_i \sim \mathcal{N}(0,1)$ for $i=1,\cdots,n$. Under the conditional distribution $P_{Y_2^n|U_2^n}$ the same information density  yields
\begin{align}\label{eq:Hn2}
&H_{n_2}=\frac{n}{2} \log  (\sigma^2_{Y_2}) +\frac{nP_2}{P_1 +N_2}\frac{\log \mathrm{e}}{2\sigma^2_{Y_2}} \nonumber \\
&+\frac{\log \mathrm{e}}{2\sigma^2_{Y_2}} \sum_{i=1}^{n} \left[\left(1-\sigma^2_{Y_2}\right)S_i^2+2\sqrt{P_2/(P_1 +N_2)}S_i\right]
\end{align} Note that, in decoding $m_2$, the corresponding mutual information functions, hence the associated information densities, are the same in both cases of superposition and dirty paper coding since $U_2^n=X_2^n$. 
As for user 1, we set $P_{Y_1^n}\sim \mathcal{N}(0,(P+N_1) \mathbf{I}_n)$. In a similar manner to user 2, we denote $i(U_1^n;Y_1^n)$ by $G_{n_1}$ based on $P_{Y_1^n}$ which is given by
\begin{align}\label{eq:Gn1}
&G_{n_1}=\frac{n}{2} \log \sigma^2_{Y_1}-\frac{nP}{N_1}\frac{\log \mathrm{e}}{2} \nonumber \\
&+\frac{1}{2}\log \mathrm{e}\sum_{i=1}^{n}  \left\{(1-\sigma^2_{Y_1})S_i^2+2 \sqrt{P/N_1}\sigma_{Y_1}S_i \right\} 
\end{align}
Lastly, based on the conditional probability distribution $P_{Y_1^n|U_1^n}$ for user 1, the corresponding information density function is denoted by $H_{n_1}$ where
$Y_1^n|U_1^n$ yields $b_1 X_2^n+Z_1^n$ with $b_1=(1-b_2)=\frac{N_1}{P_1+N_1}$ due to the input signal defined by (\ref{eq:DPC_input}). $H_{n_1}$ by the following equality.
\begin{align} \label{eq:Hn1}
&H_{n_1}=\frac{n}{2} \log  (\sigma^2_{Y_1}) +\frac{nb_1 P_2}{N_1}\frac{\log \mathrm{e}}{2\sigma^2_{Y_1}} \nonumber \\
&+\frac{\log \mathrm{e}}{2\sigma^2_{Y_1}} \sum_{i=1}^{n} \left[\left(1-\sigma^2_{Y_1}\right)S_i^2+2\sqrt{b_1 P_2/N_1}S_i\right]
\end{align}

In a similar fashion to \cite[Theorem 40]{Polyanskiy} for the degraded Gaussian BC, $\beta_{\epsilon_j,j}$, for $j=1,2$ is 
\begin{align} \label{eq:beta_GBC}
\beta_{\epsilon_2,2}&=\Pr[G_{n_2} \geq \gamma_2] \\
&=\Pr \left[\sum_{i=1}^n \left(S_i+\sqrt{\frac{P_2 \sigma_{Y_2}^2}{P_1+N_2}}\frac{1}{1-\sigma_{Y_2}^2}\right)^2\geq \frac{2\gamma_2}{\log \mathrm{e}(1-\sigma_{Y_2}^2)} \right. \\
&\left.-\frac{n\log \sigma_{Y_2}^2}{(1-\sigma_{Y_2}^2)\log \mathrm{e}}+\frac{nP_2}{(P_1+N_2)(1-\sigma_{Y_2}^2)}\left(1+\frac{\sigma_{Y_2}^2}{1-\sigma_{Y_2}^2}\right)\right]
\end{align}
\begin{align}
\beta_{\epsilon_1,1}&=\Pr[G_{n_1} \geq \gamma_1]\\
&=\Pr\left[\sum_{i=1}^n \left(S_i+\frac{\sqrt{P /N_1}\sigma_{Y_1}}{1-\sigma_{Y_1}^2}\right)^2\geq \frac{2\gamma_1}{(1-\sigma_{Y_1}^2)\log \mathrm{e}} \right. \\
&\left.-\frac{n \log \sigma_{Y_1}^2}{\log \mathrm{e}(1-\sigma_{Y_1}^2)} +\frac{nP}{N_1(1-\sigma_{Y_1}^2)}\left(1+\frac{\sigma_{Y_1}^2}{1-\sigma_{Y_1}^2}\right)\right]
\end{align} where 
\begin{align}
&\Pr[H_{n_2} \geq \gamma_2]=1-\epsilon_2 \\
&=\Pr\left[\sum_{i=1}^n \left(S_i+\frac{\sqrt{P_2/(P_1+N_2)}}{1-\sigma_{Y_2}^2}\right)^2 \geq \frac{2\gamma_2\sigma_{Y_2}^2}{\log \mathrm{e}(1-\sigma_{Y_2}^2)}\right.\label{eq:59} \\
&\left.   -\frac{n \sigma_{Y_2}^2\log \sigma_{Y_2}^2}{\log \mathrm{e} (1-\sigma_{Y_2}^2)}-\frac{n P_2/(P_1+N_2)}{(1-\sigma_{Y_2}^2)}\left(1-\frac{1}{1-\sigma_{Y_2}^2}\right)   \right]\label{eq:60}\\
&=Q_{n/2}\left(\sqrt{\lambda_2},\sqrt{t_2}\right)
\end{align} where $\lambda_2$ is the non-centrality parameter $n\frac{P_2/(P_1+N_2)}{(1-\sigma_{Y_2}^2)^2}$ and $t_2=\frac{2\gamma_2\sigma_{Y_2}^2}{\log \mathrm{e}(1-\sigma_{Y_2}^2)} -\frac{n \sigma_{Y_2}^2\log \sigma_{Y_2}^2}{\log \mathrm{e} (1-\sigma_{Y_2}^2)}-\frac{n P_2}{(P_1+N_2)(1-\sigma_{Y_2}^2)}\left(1-\frac{1}{1-\sigma_{Y_2}^2}\right) $ that is (\ref{eq:60}).
\begin{align}
&\Pr[H_{n_1} \geq \gamma_1]=1-\epsilon_1\\
&=\Pr \left[\sum_{i=1}^n \left(S_i+\frac{1}{1-\sigma_{Y_1}^2}\sqrt{\frac{b_1 P_2}{N_1}}\right)^2 \geq \frac{2\sigma_{Y_1}^2\gamma_1}{\log \mathrm{e} (1-\sigma_{Y_1}^2)} \right. \\
&\left. -\frac{n\sigma_{Y_1}^2 \log \sigma_{Y_1}^2}{\log \mathrm{e}(1-\sigma_{Y_1}^2)}-\frac{n b_1 P_2}{N_1(1-\sigma_{Y_1}^2)}\left(1-\frac{1}{1-\sigma_{Y_1}^2}\right)\right] \\
&=Q_{n/2}\left(\sqrt{\lambda_1},\sqrt{t_1}\right)
\end{align} where $\lambda_1=n\frac{P /N_1}{(1-\sigma_{Y_1}^2)^2}$ and $t_1=\frac{2\sigma_{Y_1}^2\gamma_1}{\log \mathrm{e} (1-\sigma_{Y_1}^2)}-\frac{\sigma_{Y_1}^2 \log \sigma_{Y_1}^2}{\log \mathrm{e}(1-\sigma_{Y_1}^2)}-\frac{n b_1 P_2 \sigma_{Y_1}^2}{N_1(1-\sigma_{Y_1}^2)^2}$ for $\gamma_2>0$ and $\gamma_1 > \log \sqrt{\frac{\sigma^2_{U_1|U_2}}{\sigma^2_{U_1}}}$. In the next theorem, we state the obtained achievability bound.
\begin{theorem} \label{theo:kappa-beta-DPC}
The following lower bounds on the maximal code sizes $M^*_j$ in a two receiver Gaussian BC hold for any $\epsilon_j$,  $n\geq 1$, $\tau_j\in [0,1]$ and the chosen $P_{Y_j}$ for $j=1,2$ with $F_n$ as defined by (\ref{eq:feasible_kb})
\begin{align}
&M^*_1 \geq \frac{\kappa_{\tau_1,1}(F_n,P_{Y^n_1})}{\beta_{1-\epsilon_1+\tau_1,1}(x,P_{Y^n_1})} \\
&M^*_2 \geq \frac{\kappa_{\tau_2,2}(F_n,P_{Y^n_2})}{\beta_{1-\epsilon_2+\tau_2,2}(x,P_{Y^n_2})}
\end{align} where  
\begin{equation}
\kappa_{\tau_j,j}(F_n,P_{Y^n_j})=P_{0,j}\left[\frac{p_{1,j}(r)}{p_{0,j}(r)} \geq \psi_j\right] 
\end{equation} 
with $\psi_j$ satisfying
\begin{equation}
P_{1,j}\left [\frac{p_{1,j}(r)}{p_{0,j}(r)} \geq \psi_j \right]=\tau_j.
\end{equation}
The probability distributions $p_{0,j}(r)$ and $p_{1,j}(r)$  are defined as
$p_{0,j}(r)=\frac{1}{\Gamma(n/2) \omega_j^{n/2}}r^{n/2-1}\mathrm{e}^{r/\omega_j}$ and 
${p_{1,j}(r)=\frac{1}{2}\mathrm{e}^{(r+\upsilon_j)/2}(\frac{r}{\upsilon_j})^{n/4-1/2}I_{n/2-1}(\sqrt{\upsilon_j r})}$ 
with the modified Bessel function of the first kind ${I_{b}(y)=(y/2)^b\sum_{j=0}^{\infty} \frac{(y^2/4)^l}{l! \Gamma(b+l+1)}}$
and the following parameters for $k=1$, $\omega_1=2(N_1+P)$ and $\upsilon_1=\frac{n b_1 P_2}{N_1}$, for $k=2$, $\omega_2=2(N_2+P)$ and $\upsilon_2=\frac{nP_2}{P_1 +N_2}$.
\end{theorem}
\begin{proof}
The reader is referred to the proofs of \cite[Theorems 25, 40 and 42]{Polyanskiy} for a more detailed proof on the single user bounds. $\beta_{1-\epsilon_j+\tau_j,j}(x,P_{Y^n_j})$ for $a=1-\epsilon_j+\tau_j$ given by (\ref{eq:beta_GBC}) is derived using the information densities $G_{n_2}$, $H_{n_2}$, $G_{n_1}$ and $H_{n_1}$ respectively defined by (\ref{eq:Gn2})-(\ref{eq:Hn1}). For evaluating $\kappa_{\tau_j,j}$, we set the following central and non-central chi-squared distributions $P_{0,j}$ and $P_{1,j}$, respectively using dirty paper coding as 
\begin{align}
P_{0,1}&\sim \sum_{i=1}^{n}(P+N_1)S_i^2, \\
P_{1,1}&\sim \sum_{i=1}^{n}\left(\sqrt{b_1 P_2 }+\sqrt{N_1}S_i \right)^2 \\
P_{0,2}&\sim \sum_{i=1}^{n}( P +N_2) S_i^2, \\
P_{1,2}&\sim \sum_{i=1}^{n}\left(\sqrt{P_2} +\sqrt{(P_1+N_2)}S_i \right)^2 
\end{align}
\end{proof}
\section{Discussion and Conclusion \label{sec:discuss}}
We presented the dependence testing and $\kappa \beta$ bounds for the two-receiver Gaussian BC in the setting of dirty paper coding with finite blocklength. Despite the well-known optimality of two coding schemes, superposition and dirty paper coding in the asymptotic regime, evidently dirty paper coding outperforms superposition coding in the non-asymptotic regime for a larger system with many users, $k\geq 2$.
The main difference between the two coding schemes in the non-asymptotic regime is the probability of error in decoding $m_1$ using the observation of $Y_1^n$, does not depend on the correct decoding of $m_2$ in the case of dirty paper coding as shown in Theorem \ref{theo:DPC} as opposed to the superposition coding of \cite[Theorem 2]{UG_2017}.
\bibliography{DT_biblio}
\bibliographystyle{IEEEtran}
\end{document}